\theoremstyle{plain}
\newtheorem{proposition}{Proposition}
\theoremstyle{definition}
\newtheorem{definition}{Definition}
\newtheorem{algorithm}{Algorithm}
\newtheorem{circuit}{Quantum circuit}
\begin{document}

\title{How much classical information is carried by a quantum state?
       An approach inspired by Kolmogorov complexity. }
\author{Doriano Brogioli} \date{Universit\"at Bremen,
  Energiespeicher-- und Energiewandlersysteme, Bibliothekstra{\ss}e 1,
  28359 Bremen, Germany.}

\maketitle

\begin{abstract}
  In quantum mechanics, a state is an element of a Hilbert space whose
  dimension exponentially grows with the increase of the number of
  particles (or qubits, in quantum computing). The vague question ``is
  this huge Hilbert space really there?'' has been rigorously
  formalized inside the computational complexity theory; the research
  suggests a positive answer to the question. Along this line, I give
  a definition of the (classical) information content of a quantum
  state, taking inspiration from the Kolmogorov complexity. I show
  that, for some well-known quantum circuits (having a number of gates
  polynomial in the number of qubits), the information content of the
  output state, evaluated according to my definition, is polynomial in
  the number of qubits. On the other hand, applying known results, it
  is possible to devise quantum circuits that generate much more
  complex states, having an exponentially-growing information content.
  A huge amount of classical information can be really present inside a
  quantum state, however, I show that this property is not necessarily
  exploited by quantum computers, not even by quantum algorithms
  showing an exponential speed-up with respect to classical
  computation.
\end{abstract}

\section{Introduction}

One of the surprising aspects of quantum mechanics is the enormous
size of the data necessary to represent a quantum state, in comparison
with the classical view of the same system. Although it underlyes
every physical phenomenon, most of this ashtonishing complexity is
lost when we observe the system at macroscopic level. As an example,
performing an \emph{ab initio} calculation of a semiconductor
electronic structure, by a brute-force numerical solution of the
Schr{\"o}dinger equations, is practically unfeasible; complex
approximation methods are needed to simulate the system in a
reasonable time. Notwithstanding the complexity of the calculation,
the result is just expressed by a few calculated parameters, such as
the band-gap: a few digits, which hardly show the ashtonishing
complexity of the underlying system.

The discussion above hinges around the word ``calculation'': actually,
a deep insight into the complexity of quantum mechanics is obtained by
experiments on quantum computers. They can be seen as laboratory
instruments for testing the most unusual quantum phenomena, rather
than as machines for solving difficult mathematical problems. In the
field of quantum computing, the concepts presented in the previous
paragraph are well formalized.  The information contained in a quantum
state is actually huge: the quantum equivalent of $n$ bits, i.e. $n$
qubits (the bits of quantum computers), is described by $2^n$ complex
numbers (actually, rational numbers suffice~\cite{aharonov2003}). This
notwithstanding, a measurement of the quantum state just gives us $n$
classical bits. What is surprising is rather that quantum computers
are exponentially faster (in $n$) than classical ones.  More formally,
this is expressed in computational complexity
theory~\cite{arora_barak} by the fact that the class {\bf BQP}
(roughly, problems that can be solved in polynomial time by a quantum
computer with good probability) is believed to strictly include the
class {\bf P} (roughly, the problems that can be solved in polynomial
time by classical computers).

The exploration of an exponentially large number of classical
configuration is often described as the reason of the exponential
speed-up in the calculation time; generating a large superposition of
states, with a large entanglement, and make them interfere are
necessary ingredients~\cite{vidal2003} (although not
sufficient~\cite{jozsa2003}) of quantum circuits showing superior
performances with respect to classical computers. This can be seen as
a quite indirect proof of the existence of an exponentially large
amount of information in a quantum state.

The quite vague question: ``are all these components of the quantum
state really there?'' arised in various contexts, in
particular connected with the interpretations of quantum
mechanics. More practically and rigorously, this question can been
formalized in computational complexity theory. S. Aaronson mentioned
this question as one of the ``ten grand challenges for quantum
computing theory''~\cite{aaronson_ten_challenges} and suggested to
address it by studying the substitution of classical certificates and
advices instead of their quantum equivalents in two quantum complexity
classes, {\bf QMA} and {\bf BQP/qpoly}~\cite{aaronson2007}. An even
more practical reason for asking the question is to devise
memory-compressed (zipped) representations of quantum states, aimed at
a more efficient simulation of quantum systems~\cite{hillmich2020}.

Actually, there are clues suggesting that the information contained in
a quantum state is not so huge. One of such clues is that {\bf BQP} is
a strict subset of {\bf PSPACE}: roughly, the quantum computers can be
simulated using a polynomial amount of memory, although a very long
calculation time could be needed. Unfortunately, this fact does not
help us to identify a compact structure representing the quantum
state. Another clues is that the quantum computers are less powerful
than ``non-deterministic Turing machines'', i.e. they do not
simultaneously explore every possible classical state (as sometimes
erroneously said in science popularization): this is formally
expressed by the belief that {\bf BQP} does not to contain {\bf
  NP-complete} problems, or by the fact that the Grover's search
algorithm only gives a polynomial (instead of an exponential)
speed-up.

\begin{figure}
  \includegraphics{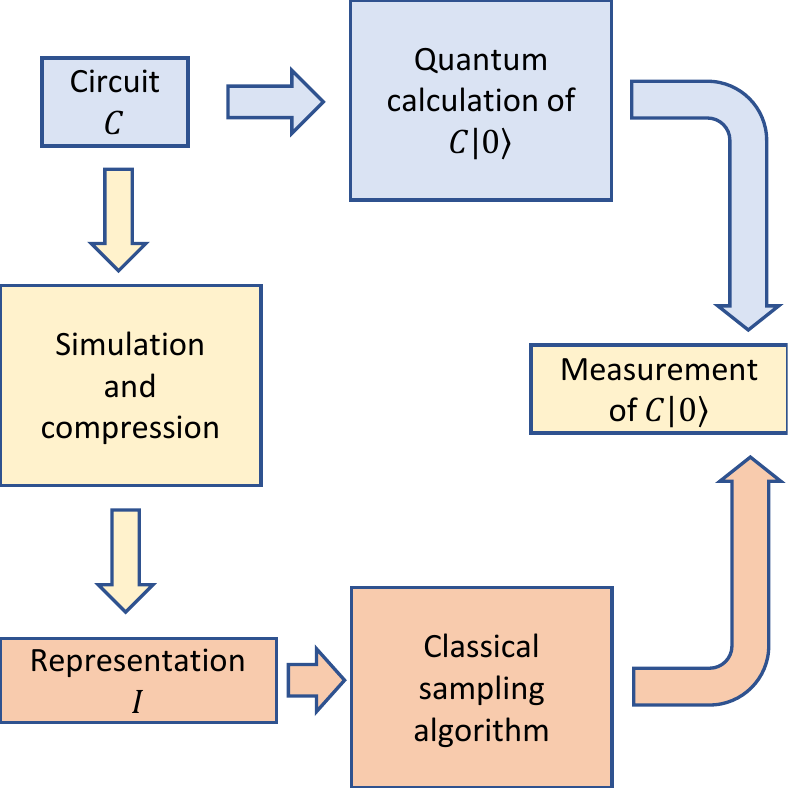}
  \caption{Scheme of the approach to the evaluation of the information
    content of a quantum state generated by a circuit. In the upper
    part, the circuit $C$ is implemented in a quantum computer,
    applied to the initial state $\left|0\right>$, and the output
    state is measured in the computational basis. In the lower part, a
    compressed (zipped) representation $I$ of $\left|0\right>$ is fed
    as input to an algorithm, which generates samples with the same
    probability distribution of the quantum computer.}
  \label{fig:scheme}
\end{figure}

In order to systematically connect the results available in literature
to the above-described question, here I propose an approach inspired
by the Kolmogorov complexity to evaluate the classical information
content of a quantum state generated by a circuit. In short, the idea
behind the Kolmogorov complexity is that the amount of information
carried by a string equals the length of the shortest algorithm able
to generate it. The approach must however be adapted, as schematically
shown in Fig.~\ref{fig:scheme}. Indeed, in order to extend this idea
to quantum states, two problems have to be solved.

The first problem is that the quantum state is not a string but rather
gives a probability distribution. There are two possible meanings of
``simulating a quantum system'': either generate the numbers
representing the probabilities of measuring a given output, or to
sample from that distribution, i.e. generate random numbers with the
same probability of the quantum state measurement.  These two
operations are called strong and weak simulation, respectively.  The
two operations do not have the same complexity, e.g. the weak is
faster than the strong simulation~\cite{bravy2016}. According to the
discussions reported in literature~\cite{vandennest2010,
  vandennest2011}, I choose the weak simulation: as shown in
Fig.~\ref{fig:scheme}, the algorithm will be a probabilistic algorithm
generating samples; such samples will have the same probability
distribution of the measurement of the quantum state. A similar
approach has been proposed for circuit complexity
classes~\cite{viola2010}.

The second problem is that, in many cases, the shortest algorithm that
samples the required probability distribution is just the algorithm
that \emph{simulates} the quantum circuit $C$ acting on
$\left|0\right>$, by brute force. Indeed, such an algorithm is quite
short, although it could take an extremely long time. However, this
solution does not express at all the intuitive concept of
``information content of the output quantum state''. We can get closer
to the intuitive idea by imposing a limit to the calculation time. As
shown in Fig.~\ref{fig:scheme}, I consider an algorithm running in
polynomial time in the number $n$ of qubits and taking an input $I$
representing the quantum state in a compressed (zipped) form. This
input $I$ will then represent the classical information contained in
the quantum state.

Summarizing, we aim at classically simulating the measurements of the
quantum state $C\left|0\right>$, i.e. the application of a circuit $C$
to the initial state $\left|0\right>$ (upper part of
Fig.~\ref{fig:scheme}). We devise a procedure to describe
$C\left|0\right>$ in a compressed form $I$. This representation $I$ is
fed as input to the probabilistic sampling algorithm, which generates
samples with the same probability distribution of the quantum
computer. The aim is to study (lower bound of) the size of $I$,
after imposing that the sampling algorithm works in polynomial time in
the number of qubits $n$. The formalization of this procedure is
described in Sect.~\ref{sect:definition:sampling:algorithm}.

In general, any state $\left|\Psi\right>$ can be obtained by the
application of a suitable circuit $C$ to $\left|0\right>$; the size of
$I$ can be thus very large, exponential in the number of
qubits. However, many of the surprising features of quantum computers
arise in relatively short circuits: calling $m$ the number of 1- and
2-qubit gates of the circuit, I assume that $m$ is polynomial in the
number of qubits $n$. Under this assumption, it makes sense to wonder
if $I$ really needs to be exponentially large in $n$, or if a
polynomially-sized $I$ suffices. The discussion of this question, for
various circuits, is the main topic of this paper.

Literature results on boson sampling and instantaneous quantum
polynomial protocol~\cite{aaronson2010, bremner2010, aaronson2016}
show that it is actually possible to devise specific circuits that
generate a quantum state with a complexity that is
exponentially-growing in the number of qubits $n$.

On the other hand, in Sect.~\ref{sect:examples} I show that some
well-known quantum circuits $C$ lead to quantum states with a
polynomial amount of information, i.e. $I$ has polynomial size. Some
of the cases are not trivial and quite surprising: it could be
na{\"i}vely expected that an exponentially-sized $I$ is needed. Such
quantum circuits are used by quantum algorithms showing an exponential
speed-up with respect to classical computation.

I finally suggest that the final answer to the question ``is this huge
Hilbert space really there?'' is positive, although the possibility of
actually using this huge amount of information is not necessarily
exploited by quantum computers, not even in algorithms showing
superior performances with respect to classical computers.  This fact
can be compared with the above-mentioned fact that generating a large
superposition of states, with a large entanglement, and make them
interfere are necessary ingredients~\cite{vidal2003} for getting
superior performances from quantum computers, although they are not
sufficient~\cite{jozsa2003} for making a quantum circuit hard to be
simulated.

\section{Definition of classical information contained in the quantum
  state generated by a quantum circuit}

\label{sect:definition:sampling:algorithm}

The problems that I will discuss concern the possibility of simulating
the result of a measurement of the quantum state generated by a
quantum circuit $C$. The ``simulation'' is done by a classical
probabilistic algorithm (also called ``randomized''), i.e. a classical
algorithm that has a generator of random uniformly distributed
bits. This concept is captured by the following definition.

\begin{definition}[Sampling algorithm]
  \label{def:sampling:algorithm}
  A set of quantum circuits $\mathcal{C}$ is given, with the property
  that every quantum circuit $C\in \mathcal{C}$, operating on $n$
  qubits, has $m$ 1- and 2-qubit gates, with $m$ polynomial in $n$.  A
  sampling algorithm $A_{\mathcal{C}}$ is a classical probabilistic
  algorithm with the following property. For there exists an input $I$
  to $A_{\mathcal{C}}$, such that the output of $A_{\mathcal{C}}$ is a
  string $x$ having the same probability distribution of measuring $x$
  after applying $C$ to the quantum state $\left|0\right>$.
\end{definition}
Since $A_{\mathcal{C}}$ is probabilistis, its output is different every time it is
run, hence we can speak of the probability of the output.  It is worth
remarking that the term ``classical'' is omitted from the name
``sampling algorithm'' defined above and will be implicitly assumed in
the following.

There is a trivial sampling algorithm working for every $\mathcal{C}$:
giving a representation of $C$ as input $I$, the sampling algorithm is
any algorithm that performs a weak simulation of the quantum
circuit. In this case, the size of input $I$ is polynomial in $n$, but
the calculation time is likely exponential in $n$. Indeed, it is
widely believed that {\bf BQP} contains problems outside {\bf P}: this
means that the weak simulation of some quantum circuits requires
super-polynomial time.  Actually, all the known simulation algorithms
operate in exponential time, although the base of the exponential can
be quite small in specialized algorithms~\cite{bravy2016}.

On the other hand, it is possible to devise a sampling algorithm that
works in polynomial time in $n$, by providing a suitable
representation of $C\left|0\right>$ as input $I$. The procedure is
based on the so-called ``inverse transform sampling'' and is
described, for the present application, in Sect.~III of
Ref.~\cite{hillmich2020}.

The procedure is now briefly sketched. All the probabilities $P(x)$
are first calculated based on the strong simulation of $C$ on the
initial state $\left|0\right>$. Then the probabilities are summed to
get the cumulative probabilities $P_C(x)= \sum_i^x P(i)$ and their
representation is passed to the sampling algorithm $A_{\mathcal{C}}$ as input
$I$. Finally, $A_{\mathcal{C}}$ generates a random number $p$; applies the
bisection algorithm to find the largest $x$ such that $P_C(x)\le p$;
and returns this $x$.

Although this method works for every circuit class $\mathcal{C}$, the
size of input $I$ is exponential in $n$. It is also worth noting that,
in this case, the complexity of the calculation performed by the
quantum circuit $C$ shows up during the calculation of $I$ from $C$,
which requires exponential time.

These two simple examples of sampling algorithms show that there is a
trade-off between space and time resources: if one between input size
and calculation time is polynomial in $n$, then the other is
super-polynomial. For this reason, I suggest that it is interesting to
define the information content of the quantum state generated by a set
of quantum circuits as follows.  .

\begin{definition}
  \label{def:amount:information}
  A set of quantum circuits $\mathcal{C}$ is given, with the
  properties described in
  Def.~\ref{def:sampling:algorithm}. Considering the sampling
  algorithms $A_{\mathcal{C}}$ that operates in polynomial time in $n$
  on average, we say that the information content of the output
  quantum state $C\left|0\right>$ is polynomial in $n$ if there is a
  sampling algorithms $A_{\mathcal{C}}$ for which the input $I$ has
  polynomial size in $n$. Analogously, we define ``super-polynomial''
  and ``exponential''. 
\end{definition}

Literature results on boson sampling and instantaneous quantum
polynomial protocol~\cite{aaronson2010, bremner2010, aaronson2016}
show that it is actually possible to devise specific circuit sets
$\mathcal{C}$ that generate a quantum state with a complexity that is
exponentially-growing in the number of qubits $n$. An explicit
construction of such quantum circuits can be found in Sect. 4.3 of
Ref.~\cite{aaronson2010}; the proof is based on the conjecture that
the polynomial hierarchy does not collapse (information based on a
personal communication by the Author of the cited paper).

On the other hand, in the next section I will present some cases in
which the information content is polynomial. These cases include
quantum circuits used in quantum algorithms that show exponential
speed-up with respect to classical computation.

\section{Examples}

\label{sect:examples}

\subsection{Quantum circuits that can be efficiently simulated by classical
algorithms}

\label{sect:circuits:can:be:simulated}

It is known that some classes of quantum circuits can be efficiently
simulated by classical algorithms, although it is believed that this
cannot be done in general.  In such cases, the input $I$ to the
sampling algorithm $A_{\mathcal{C}}$ can be - trivially - a representation of the
circuit.

A very simple example is the class of circuits formed by a sequence of
Hadamard gates, followed by a sequence of Toffoli gates; these
circuits are called HT in Ref.~\cite{vandennest2010}.
Each Hadamard gate puts a qubit in an uniform
superposition of states $\left|0\right>$ and $\left|1\right>$: this
operation can be simulated by tossing a coin, obtaining a random
$x$. The subsequent Toffoli gates are, actually, classical reversible
gates: applying them to $x$, the desired sample is obtained. The HT
circuits actually correspond to the probabilistic classical
computation, in which a sequence of classical operations is applied
to an input state which also contains random coins. 

Various classes of circuits can be efficiently simulated by classical
algorithms~\cite{vandennest2011}; they include circuits composed by
Clifford gates~\cite{gottesman1998} and by nearest-neighbor
matchgates~\cite{valiant2002}, optical quantum information
circuits~\cite{bartlett2002}, 2-local commuting
circuits~\cite{ni2013}, circuits with small
entanglement~\cite{vidal2003}, tree-like circuits~\cite{markov2008}.
In several cases, the weak simulation of the circuit class has been
explicitly addressed, see the discussion in
Refs.~\cite{vandennest2010, vandennest2011}.

In this context, a particularly important class is represented by the
circuits composed by the Clifford gates: the Gottesmann-Knill
theorem~\cite{gottesman1998} states that they can be efficiently
simulated by classical algorithms. The proof is based on the
``normalizer'' formalism; it will be discussed in more in detail in
Sect.~\ref{sect:clifford:generalization}, together with its
extensions. A different proof, given in Ref.~\cite{vandennest2010},
shows more explicitly that it is possible to make a weak simulation,
i.e. to sample the outcome of the measurements.

In the generic formulation exposed at the beginning of the section,
the input $I$ is a representation of the circuit itself. In the case
of circuits composed by Clifford gates, there is an alternative
possibility.  Indeed, following the various proofs of Gottesmann-Knill
theorem, it appears that the quantum state (after the application of
any number of Clifford gates) can be represented by means of a data
structure with polynomial size in $n$: either the so-called
``tableau''~\cite{aaronson2004}, or a ``graph-state''
representation~\cite{anders2006}. The sampling algorithm can then
operate in polynomial time on such structures for generating the
random samples. The above-mentioned alternative possibility is thus to
use one of such structures as input $I$ for the sampling algorithm.

It must be remarked that the ``tableau'' and the ``graph-state''
representation allows the sampling algorithm to generate samples
independently of the number of gates $m$ of the circuit, which are
thus not limited to be polynomial in $n$: this is a stronger property
than required by Def.~\ref{def:amount:information}, which only refers to $m$
polynomial in $n$. This property is shared with other circuits that
can be classically simulated; the discussion is reported in
Ref.~\cite{vandennest2011}, where the the quantum states that can be
efficiently sampled by classical circuits (independently of the number
of gates $m$) are called "computationally tractable states".

\subsection{Period-finding circuit in Shor's algorithm.}
\label{sect:shor:alg}

It could be argued that quantum circuits that cannot be simulated
classically should exploit quantum states with a super-polynomial
information content.  The factorization of a large number $N$ is
believed to be impossible for a classical algorithm working in
polynomial time in $\log N$.  The famous Shor's factorization
algorithm uses a quantum circuit that operates in polynomial number of
gates. We thus expect that the quantum states involved in this quantum
circuit cannot be sampled by a classical algorithm. I will show that
this is actually not the case, i.e., also in the case of Shor's
algorithm, the information content is polynomial in $n$.

A fundamental element of Shor's factorization algorithm is the
``modulus exponentiation'' function $f$:
\begin{equation}
  f_{a,N}(x) = a^x \pmod N
\end{equation}
defined based on $N$ and an integer $a$. This function can be
efficiently calculated by a classical reversible circuit (thus, also
by a quantum one).

The quantum circuit used in Shor's algorithm has the role of finding
the period $r$ of $f_{a,N}(x)$, for given $a$ and $N$. The function is
indeed periodic: there are two numbers $r$ and $x_{min}$ such that
$f_{a,N}(x) = f_{a,N}(x+r)$ for every $x\ge x_{min}$. I define
``period'' as the smallest $r$.

In Shor's algorithm, $a$ and $N$ are coprime: this condition ensures
that $f_{a,N}(x)$ is always periodic, i.e. $x_{min}=0$. In the
following, I will also focus on this case. In more general cases,
$f_{a,N}(x)$ is still ``ultimately periodic'', i.e., there are values
of $a$ and $N$ such that $f_{a,N}(x)$ is periodic for $x$ larger
than a $x_{min}>0$. There are no reasons why such values cannot be
used anyway in the period-finding quantum circuit. This case can be
addressed by a trivial extension of the methods discussed in this
section and is reported in detail in Appendix~\ref{app:non:periodic}.

I describe now the quantum circuit. It uses two registers, of $n_x$
and $n_f$ qubits, respectively, initialized to
$\left|0\right>\left|0\right>$.

\begin{circuit}
  \label{circ:shor}
  \begin{enumerate}
  \item Apply Hadamard gates to the second register to get a uniform
    superposition of $x$:
    \begin{equation}
      \left|\Psi_1\right> = 2^{-n_x}
      \sum_{x} \left|x\right>\left|0\right>
    \end{equation}
  \item Calculate the modulus exponentiation:
    \begin{equation}
      \left|\Psi_2\right> = 2^{-n_x}
      \sum_{x} \left|x\right>\left|f_{a,N}(x)\right>
    \end{equation}
  \item Apply the quantum Fourier transform to the first register:
    \begin{equation}
      \left|\Psi_3\right> = 2^{-n_x}
      \sum_{x} \left|\tilde{x}\right>\left|f_{a,N}(x)\right>
    \end{equation}
  \end{enumerate}
\end{circuit}
  
The circuit is hard-wired to perform the modulus exponentiation $f$
with given $a$ and $N$. It is worth noticing that the quantum Fourier
transform used here is the one on $\mathbb{Z}_{2^{n_x}}$, and that it
operates on an interval of numbers whose length is $2^{n_x}$.

When the measurement is finally performed, the second register results
in a value of $f_0$, equal to $f_{a,N}(\bar{x})$ for a random
$\bar{x}$. In $\left|\Psi_2\right>$, the first register thus contains
a superposition of all the $\left|x\right>$ such that $f_{a,N}(x)=f_0$
(all the $x$ in the pre-image of $f_0$, including $\bar{x}$): they are
exactly spaced by $r$. Applying the Fourier transform to the first
register then transforms this periodicity into peaks, at multiples of
$2^{n_x}/r$. The final measurement of the first register will give an
$\tilde{x}$ close to one of such peaks: the periodicity $r$ of $f$ is
found.

Before discussing the tricky details, it is worth considering the
special case when $r$ is a power of 2 (smaller than $2^{n_x}$). In
this case, the peaks in the Fourier transform are exactly at integer
numbers $2^{n_x}/r$ and are perfectly sharp: the
Fourier transform is 0 everywhere, except for values of $\tilde{x}$
multiples of $2^{n_x}/r$, where it takes complex values with same modulus.
Measuring the the first register thus always gives exactly a
$\tilde{x}$ multiple of $r$.

In this special case, it is thus easy to weakly simulate the output of
the measurement of the quantum circuit Circ.~\ref{circ:shor}, by the
following procedure:
\begin{description}
\item[First register] output a random multiple of $2^{n_x}/r$ less than $2^{n_x}$;
\item[Second register] generate a random $\bar{x}$ and output $f_{a,N}(\bar{x})$
\end{description}

Unfortunately, the situation is less simple when $r$ is not a
power of 2: the position of the peaks in the Fourier transform,
$2^{n_x}/r$, do not exactly fall on an integer number and the peaks
become broad. The measurement of the first register will give
an integer value close to a multiple of $2^{n_x}/r$.

As discussed above, assuming that the second register will be
measured, after the calculation of the modulus exponentiation, the
first register contains a superposition of states $\left|r m +
x_0\right>$, where $0\le m\le M$ and $M$ is the maximum $m$ such that
$r m + x_0<2^{n_x}$; the number of the elements in the superposition
is $M+1$. The Fourier transform is the superposition of all the
$\Psi(\tilde{x})\left|\tilde{x}\right>$ with:
\begin{equation}
  \Psi\left(\tilde{x}\right) = \frac{1}{\sqrt{2^{n_x}\left(M+1\right)}}
  \sum_{m=0}^M \exp\left(2\pi i \frac{\tilde{x}\left(r m+x_0\right)}
      {2^{n_x}} \right)
\end{equation}
I remind that the frequency $\tilde{x}$ is in the same range of $x$,
i.e. $0\dots 2^{n_x}-1$.  The expression, up to an irrelevant phase
term depending on $x_0$, is a geometrical series, which can be
explicitly calculated:
\begin{equation}
  \Psi\left(\tilde{x}\right) = \frac
   { \exp\left(2\pi i \frac{\tilde{x}x_0} {2^{n_x}} \right)}
  {\sqrt{2^{n_x}\left(M+1\right)}}
  \frac{
  \exp\left[2\pi i \frac{\tilde{x} r \left(M+1\right)}{2^{n_x}} \right]-1
  }{
  \exp\left[2\pi i \frac{\tilde{x} r}{2^{n_x}} \right]-1
  }
  \label{eq:shor:fourier:raw}
\end{equation}
This expression is not defined for $\tilde{x}=0$; here and in the
following, I will implicitly assume that the undefined values are
extrapolated by continuity, considering a real $\tilde{x}\to 0$.

The probability
$P\left(\tilde{x}\right)=\left|\Psi\left(\tilde{x}\right)\right|^2$
is now calculated; it is useful to write it as:
\begin{equation}
  P\left(\tilde{x}\right) = \frac{q}{2^{n_x}} \rho\left(\tilde{x} p\right)
\end{equation}
where:
\begin{equation}
  \rho\left(v\right) =
  \frac{1}{q\left(M+1\right)}
  \frac{
    1-\cos\left[2\pi \frac{v \left(M+1\right)}{q} \right]
  }{
    1-\cos\left[2\pi \frac{v}{q} \right]
  }
\end{equation}
and $p/q=r/2^{n_x}$; the two natural numbers $p$ and $q$ are chosen so that
they are coprime.

Thanks to the periodicity of the trigonometric functions, $\rho$ is
periodic with period $q$. This suggests to ``wrap'' it as:
\begin{equation}
  P\left(\tilde{x}\right) =
  \frac{q}{2^{n_x}} \rho\left(\tilde{x} p \bmod q\right)
\end{equation}
It is possible to express $\tilde{x}$, in the range $0\dots 2^{n_x}-1$, as
\begin{equation}
  \tilde{x} =s\left(\tilde{x}\right)+z\left(\tilde{x}\right) q
\end{equation}
where:
\begin{equation}
  s\left(\tilde{x}\right) = \tilde{x} \bmod q
\end{equation}
is in the range $0\dots q-1$ and
\begin{equation}
  z\left(\tilde{x}\right) = \left\lfloor \frac{\tilde{x}}{q} \right\rfloor
\end{equation}
is in the range $0\dots 2^{n_x}/q-1$. The result is:
\begin{equation}
  P\left(\tilde{x}\right) = \frac{q}{2^{n_x}}
  \rho\left[s\left(\tilde{x}\right) p \bmod q\right]
\end{equation}
This expression can be interpreted by identifying the two terms on the
right-hand side as two probabilities:
and
\begin{equation}
  P\left(\tilde{x}\right) = \rho_z\left[z\left(\tilde{x}\right)\right]
    \rho_s\left[s\left(\tilde{x}\right)\right]
\end{equation}
where the probability distributions of $s$ and $z$ are:
\begin{equation}
  \rho_z\left(z\right) = \frac{q}{2^{n_x}}
\end{equation}
and
\begin{equation}
  \rho_s\left(s\right) =
  \rho\left[s p \bmod q\right]
\end{equation}
Since $p$ and $q$ are coprime, $v=s p \bmod q$ is a bijection between
$v$ and $s$ on the range $0\dots q-1$; moreover it is possible to
efficiently calculate $s$ given $v$. Sampling $s$ from
$\rho_s\left(s\right)$ can thus be accomplished by first sampling $v$
from $\rho\left(v\right)$ and then calculating $s$ using the
above-mentioned bijection.

Summarizing, sampling from $P\left(\tilde{x}\right)$, i.e. simulating
Circ.~\ref{circ:shor}, can be accomplished by the following algorithm:
\begin{algorithm} \,
  \label{alg:shor}
  
  \begin{enumerate}
  \item Sample a natural number $v$ from the distribution
    $\rho\left(v\right)$, with $v$ in the range $0\dots q-1$;
  \item Calculate $s$ such that $v=s p \bmod q$;
  \item Uniformly sample a natural number $z$ in the range $0\dots 2^{n_x}/q-1$;
  \item Return $\tilde{x}=s+zq$.
  \end{enumerate}
\end{algorithm}

\begin{figure}
  \includegraphics{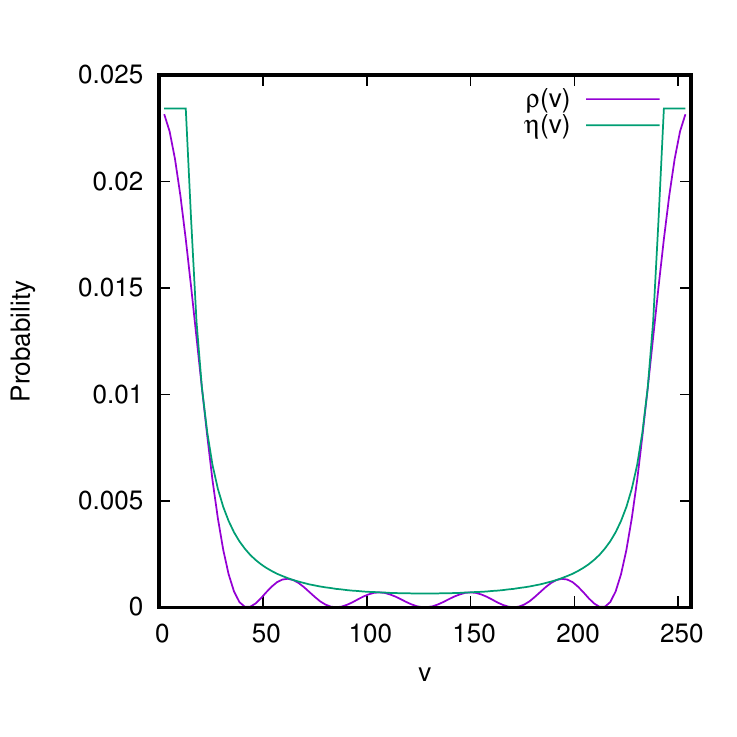}
  \caption{Graph of $\rho\left(v\right)$ and $\eta\left(v\right)$, for
    $M=5$ and $q=2^8$. }
  \label{fig:rho:rho:e}
\end{figure}

It remains to show how to sample from
$\rho\left(v\right)$. Figure~\ref{fig:rho:rho:e} shows an example of
the graph of $\rho\left(v\right)$: intuitively, it represents the
probability between two consecutive peaks.  It is possible to
efficiently sample from $\rho\left(v\right)$ by means of Monte Carlo
rejection sampling, using the proposal function $\eta\left(v\right)$:
\begin{equation}
  \eta\left(v\right) = \frac{1}{q\left(M+1\right)}
  \min \left[ \left(M+1\right)^2 ,
    \frac{ 2 }{
      1-\cos\left(2\pi \frac{v}{q} \right)
    }
    \right]
  \label{eq:def:eta}
\end{equation}
The proposal function $\eta\left(v\right)$ is approximately the
envelope of $\rho\left(v\right)$; both functions are shown in
Fig.~\ref{fig:rho:rho:e}. The application of the Monte Carlo procedure
by means of the reported proposal function $\eta\left(v\right)$ does
not present remarkable features and is sketched in
Appendix~\ref{app:sampling:rho} for completeness sake, together with
the (quite long but trivial) proof of its efficiency.

\subsection{Circuit used in Grover's algorithm}

The Grover's algorithm is presented as a quantum algorithm for finding
a database entry $x_0$ with a desired property among the $n$-bit
strings~\cite{grover1997, lavor2003}.
The requested property is expressed by a function $f(x) : [0,1]^n \to
[-1,1]$: it returns -1 if $x$ has the requested property, else 1.  The
goal of the algorithm is to find $x_0\in [0,1]^n$ such that
$f(x_0)=-1$. I assume that there is only a single $x_0$ with this
property.

The function $f(x)$ is seen as a black box, i.e. nothing is known
about it, except that it is implemented as a subroutine that returns
$f(x)$ on input $x$. A classical algorithm should make an exhaustive
search on $2^n$ elements to find $x_0$: on average, it will take
$2^{n-1}$ attempts. Grover's algorithm shows a polynomial seed-up,
namely the number of steps is of the order of $2^{n/2}$.

In this section, I show that the quantum states arising in the quantum
circuit used in Grover's algorithm carry a polynomial classical
information amount: actually, I give two different possible sampling
algorithms $A_{\mathcal{C}}$ and input states $I$, exploiting
different resources. Rigorously, Grover's algorithm prescribes the use
of a quantum circuit with length exponential in the number of bits
$n$, which would not fit into Def.~\ref{def:sampling:algorithm}. Thus
I must considered a shortened version of that circuit. A complete
description of Grover's algorithm is outside the scope of this paper;
I describe the shortened quantum circuit only to the extent that is
needed to discuss the information content according to
Def.~\ref{def:amount:information}.

The shortened version of the quantum circuit is the following.

\begin{circuit}[Shortened version of the circuit used in Grover's algorithm]
  \label{circ:grover}
  The function $f(x)$ is used to define the operator $U_f$:
  \begin{equation}
    U_f \left|x\right> = f(x) \left|x\right>
  \end{equation}
  A second operator, called ``diffusion operator'', $U_g$, is defined
  analogously, with a function $g(x)$, which takes the value -1 for
  $x=0$, else 1.

  The initial quantum state is $\left|0\right>$. Hadamard gates are
  applied to every qubit. Then the following sets of gates are
  repeatedly applied $t$ times:
  \begin{itemize}
  \item $U_f$ is applied
  \item Hadamard gates are applied to all the qubits
  \item $U_g$ is applied
  \item Hadamard gates are applied to all the qubits
  \end{itemize}

  The number $t$ is a polynomial in $n$, $t(n)$.

  The output of the circuit is finally measured in the computational
  basis.
\end{circuit}

Definition~\ref{def:sampling:algorithm} requires that the number of 1-
and 2-qubit gates of the circuit is polynomial in $n$. The diffusion
operator $U_g$ can be built using a polynomial (in $n$) number of 1-
and 2-qubit gates~\cite{lavor2003}. Moreover, I assume that $f(x)$ can
be calculated by a classical reversible circuit with a polynomial (in
$n$) number of 1- and 2-qubit gates; also $U_f$ can be realized with a
polynomial number of gates. In Grover's algorithm, $t$ is exponential
in $n$; here I assume instead that it is polynomial, so that the total
number of 1- and 2-qubit gates $m$ is polynomial in $n$, as requested
in Def.~\ref{def:sampling:algorithm}.

It can be shown that, at any $m$, the quantum state is a superposition
of $\left|0\right>$ and $\left|x_0\right>$
states~\cite{grover1997}. The probability $P(x)$ of getting the
required result $x_0$ can be explicitly calculated:
\begin{equation}
  P(t) = \sin^2 \left[
    \left(2t+1\right) \arcsin\left(\frac{1}{2^{n/2}}\right) \right]
  \label{eq:grover:def:P:t}
\end{equation}
It is worth remarking that $t$ is considered as a fixed polynomial in
$n$.

A possible sampling algorithm $A_{\mathcal{C}}$ can be the following.
\begin{algorithm} \,
  \label{alg:grover:with:x0}
  
  {\bf Input} $x_0$ such that $f(x_0)=-1$ and the number of steps $t$.
  
  {\bf Output} $x_0$ with probability $P(t)$, else 0.
  
  (The algorhm is trivial)
\end{algorithm}

The described procedure requires to first find $x_0$, thus the
algorithm that generates $I$ must perform an exhaustive search on
$[0,1]^n$. A more lazy alternative consists to pass a representation
of $f(x)$ (or $U_f$) as input $I$.

\begin{algorithm} \,
  \label{alg:grover:without:x0}
  
  {\bf Input} A representation of the circuit that calculates $f(x)$
  and the number of steps $t$.
  
  {\bf Output} $x_0$ with probability $P(t)$, else 0.
  
  \begin{enumerate}
  \item Randomly generate $N$ numbers $x_i$ in the range $0\dots 2^n$, where:
    \begin{equation}
      N = \left\lceil \frac{\log\left[1-P(t)\right]}
      {\log\left(1-\frac{1}{2^n}\right)} \right\rceil
      \label{eq:alg:grover:2:N}
    \end{equation}
  \item Calculate $f(x_i)$;
  \item If $f(x_i)$ is 1 for every $x_i$, return 0.
  \item Call $x_0$ the $x_i$ such that $f(x_i)=-1$.
  \item With probability $P'$ return $x_0$, where
    \begin{equation}
      P' = \frac{P\left(t\right)}{
        1-\left(1-\frac{1}{2^n}\right)^N
      }
      \label{eq:alg:grover:2:Pp}
    \end{equation}
  \item Else return 0.
  \end{enumerate}
\end{algorithm}

This algorithm is correct and efficient. The proof is sketched in
Appendix~\ref{app:grover:proof}. The appendix also shows that $N$
(proportional to the running time) is polynomial in $t$, with leading
order two.

Both algorithms, Alg.~\ref{alg:grover:without:x0} and
Alg.~\ref{alg:grover:with:x0}, can be used as sampling algorithm
$A_{\mathcal{C}}$ in Def.~\ref{def:amount:information}. The case of
Alg.~\ref{alg:grover:without:x0} is similar to the case discussed in
Sect.~\ref{sect:circuits:can:be:simulated}: it shows that
Circ.~\ref{circ:grover} can be efficiently simulated by classical
algorithms, although it takes quadratic time in $n$ and $m$.  The
alternative sampling algoritym, Alg.~\ref{alg:grover:with:x0},
requires that the solution of the database search, $x_0$, is given as
input $I$: it reminds the situation of Sect.~\ref{sect:shor:alg},
where the solution of the problem (there, the period $r$) is used as
input to the sampling algorithm $A_{\mathcal{C}}$.

The two algorithms also differ in the used resources: there is a
trade-off between them. Algorithm~\ref{alg:grover:without:x0} requires
a possibly longer $I$, representing the circuit of $U_f$, runs in
quadratic time in $t$, and the preparation of $I$ takes polynomial
time in the size of the circuit of $U_f$. By contrast,
Alg.~\ref{alg:grover:with:x0} requires a shorter $I$, just containing
$x_0$ and $t$, runs in fixed time independently on $t$, but the
preparation of $I$ requires an exhaustive search to find $x_0$.

\subsection{Generalization of Clifford gates}

\label{sect:clifford:generalization}

The Clifford gates, mentioned above in
Sect.~\ref{sect:circuits:can:be:simulated}, can be generalized by a
method based on group theory~\cite{vandennest2013}.  A complete and
detailed discussion of the generalization of Clifford gates is outside
the scope of this paper. Here I discuss the topic only to the extent
that it is relevant for the information content of
Def.~\ref{def:amount:information}. The discussion follows the lines of
Ref.~\cite{vandennest2013}.

A finite Abelian group $G$ is given. The quantum states are defined on
a computational basis formed by states $\left|x\right>$, with $x\in
G$. The ``generalized Clifford gates'' $C_i$ are a set of unitary
operators, defined based on $G$.  These gates include the Fourier
transforms over $G$. Roughly, they operate on many bits, so they look
more like whole circuits. The traditional Clifford gates are generated
by the group $G=\mathbb{Z}_2^n$. The definition of these gates is not
reported here and is not necessary for the discussion.

Based on $G$, the following definition is given.
\begin{definition}[Coset state of $G$]
  An Abelian finite group $G$ is given. $K$ is a subgroup of $G$ and
  $x$ is an element of $G$.  The coset state $\left|\psi(K,x)\right>$
  is:
  \begin{equation}
    \left|\psi\right> = \frac{1}{\sqrt{\left|K\right|}}
    \sum_{k\in K} \left|k+x\right>
  \end{equation}
  where $\left|\cdot\right|$ is the cardinality of a set.
\end{definition}

The interest in the generalized Clifford gates and in the coset states
stems from the fact that it is possible to efficiently classically
sample from the output of any sequence of generalized Clifford gates
$C_i$ applied to a coset state $\left|\psi(K,x)\right>$.  However this
operation (the efficient classical sampling) requires to know a
suitable representation of $\left|\psi(K,x)\right>$: in particular, a
generating set of $K$ must be known, with polynomial size in
$\log\left|G\right|$ (roughly analogous to the number of qubits,
although here we do not have qubits at all). The proof is given in
Ref.~\cite{vandennest2013}.

In order to apply this result to the discussion performed in this paper,
I introduce a class of circuits.
\begin{circuit}[Circuits ending with generalized Clifford gates]
  \label{circ:generalized:clifford}
  An Abelian finite group $G$ is given. The circuit is composed by two
  stages. The first, $U$, is applied to $\left|0\right>$ and generates
  a coset state $\left|\psi(K,x)\right>$. The second is composed by a
  sequence of generalized Clifford gates over $G$.
\end{circuit}
According to the above-mentioned results of
Ref.~\cite{vandennest2013}, the following efficient sampling algorithm
can be used.
\begin{algorithm} \,
  \label{alg:generalized:clifford}
  
  {\bf Input} The generating set of $K$, with polynomial size in
  $\left|G\right|$.

  {\bf Output} The requested samples

  The algorithm is described in Ref.~\cite{vandennest2013}.
\end{algorithm}

We see that the quantum circuits of the family
Circ.~\ref{circ:generalized:clifford} can be efficiently sampled by
Alg.~\ref{alg:generalized:clifford}, thus also in this case the amount
of information is polynomial in $n$ according to
Def.~\ref{def:amount:information}.

Quantum circuits of the family Circ.~\ref{circ:generalized:clifford}
are used in quantum algorithms for solving the ``Abelian hidden
subgroup problems''~\cite{lomont2004}. A problem in this class (rigorously,
with some limitations) can be expressed in terms of a decision problem
in {\bf NP} that is in {\bf BQP} but is believed to lie outside {\bf
  P}. The certificate is the generating set of the hidden
subgroup. Thus we see that the certificate itself, which is
polynomially large in $\log\left|G\right|$, is the required input $I$.
This situation reminds what happens with the factorization discussed
in Sect.~\ref{sect:shor:alg}, where the period $r$ is used to
calculate the factors, constituting the certificate of the {\bf NP}
problem, and is given as input $I$ to the sampling algorithm
$A_{\mathcal{C}}$. Actually, the factorization can be seen as an Abelian hidden
subgroup problems by means of a quantum circuit of the family
Circ.~\ref{circ:generalized:clifford}. Shor's algorithm is indeed an
approximated version of such a quantum algorithm.

\subsection{Period-finding circuit with uniform superposition of $N$}

In the previous section, the value of $N$ is hard-wired inside the
quantum circuit and the information needed by the sampling algorithm,
in order to describe the quantum circuit, is the period $r$. It is
however possible to make an equivalent quantum circuits that takes $N$
as an input, in a separate register. This single change would not
significantly modify the situation: depending on the input $N$, the
sampling algorithm is provided by a different $r$. Apparently, the
situation would dramatically change if the input $N$ is constructed as
a superposition of all the possible values of qubits, by means of
Hadamard gates: providing the value of $r$ for each $N$ would require
an amount of bits exponentially large in $\log N$. This would be a
case of polynomial size of information according to
Def.~\ref{def:amount:information}. I will show that this is not the case and
the difficulty can be circumvented, leading to a polynomial amount of
information also in this case.

We use three registers, initially containing 0, representing $N$, $x$,
and the result of the modulus exponentiation $f_{a,N}(x)$; the number
of qubits are $n_N$, $n_x$ and $n_f$, respectively.  The stages of the
circuit are:

\begin{circuit}
  \label{circ:shor:multiple}
  \begin{enumerate}
  \item Apply Hadamard gates to the first register to get a uniform
    superposition of $N$:
    \begin{equation}
      \left|\Psi_1\right> = 2^{-n_N}
      \sum_N \left|N\right>\left|0\right>\left|0\right>
    \end{equation}
  \item Apply Hadamard gates to the second register to get a uniform
    superposition of $x$:
    \begin{equation}
      \left|\Psi_2\right> = 2^{-n_Nn_x}
      \sum_{N,x} \left|N\right>\left|x\right>\left|0\right>
    \end{equation}
  \item Calculate the modulus exponentiation:
    \begin{equation}
      \left|\Psi_3\right> = 2^{-n_Nn_x}
      \sum_{N,x} \left|N\right>\left|x\right>\left|f_{a,N}(x)\right>
    \end{equation}
  \item Apply the quantum Fourier transform to the second register:
    \begin{equation}
      \left|\Psi_4\right> = 2^{-n_Nn_x}
      \sum_{N,x} \left|N\right>\left|\tilde{x}\right>\left|f_{a,N}(x)\right>
    \end{equation}
  \end{enumerate}
\end{circuit}

Sampling the measurement of $\left|\Psi_4\right>$ would give a random
$N$ in the first register, a value $f_{a,N}(x)$ for a random $x$ in
the third, and an $\tilde{x}$ close to a multiple of $r$. Sampling
first a random $N$ and then trying to sample $\tilde{x}$ would require to
calculate $r$ from $N$, which is likely not feasible in polynomial time,
or to provide the sampling algorithm with the value of $r$ for every $N$,
which is not feasible in exponential space in $\log N$.

However, it turns out that it is possible to sample the couple
$(N,r)$, with a uniform distribution of $N$, such that $r$ is the
period of $f_{a,N}(x)$, in polynomial time in $\log N$. Although
surprising, it is actually easier to sample the couple $(N,r)$ rather
than finding the period $r$ corresponding a number $N$.  I give a
proof of this curious fact in the following of the section.  This fact
allows the sampling algorithm of Sect.~\ref{sect:shor:alg} to be
adapted to this new case, thus the quantum states involved in the
above described quantum circuit, Circ.~\ref{circ:shor:multiple}, have
a polynomial amount of information according to
Def.~\ref{def:amount:information}.

Curiously, with the described procedure, no information must be
passed to the sampling algorithm, except the number of qubits $n_N$,
$n_x$ and $n_f$: the outcome of the measurement can be sampled by a
fixed algorith, with a fixed length, whatever is the number of
involved qubits. By seeking for a more complex quantum state, we
actually devised a much simpler one.

In the rest of this section, I sketch the algorithm to sample the
couple $(N,r)$ in polynomial time. This algorithm profits on a method
for uniformly sampling numbers in factorized form~\cite{bach1988,
  kalai2003}, operating in polynomial time: although uniformly
sampling a number $N$ and then factorizing it is believed to be a
difficult task, directly generating the factors $p_i$, such that their
product $N$ is uniformly distributed, can be accomplished in
polynomial time in the number of desired bits.

The algorithm given by Kalai~\cite{kalai2003} is extremely simple.

\begin{algorithm}[Kalai's algorithm] \,
  \label{alg:kalai}
  
  {bf Input:} maximum number $N_{max}$
  
  {\bf Output:} a random $N$, uniformly distributed between
  1 and $N_{max}$, along with its prime factorization.
  
  \begin{enumerate}
  \item Generate a sequence $N_{max} \ge s_1 \ge s_2 \ge \dots s_l =
    1$ by choosing $1\le s_1 \le N_{max}$ and $1\le s_{i+1}\le s_i$,
    until reaching 1.
  \item Select the $s_i$ that are prime.
  \item Let $N$ be the product of the prime $s_i$'s.
  \item If $N\le N_{max}$, output $N$ (along with its factors) with
    probability $N/N_{max}$.
  \item Otherwise, repeat from 1.
  \end{enumerate}
\end{algorithm}

First, we notice that the Kalai's algorithm can be
adapted to generate an uniformly distributed $N$ along with the
factorization of $N-1$, with a slight modification. Moreover, we
notice that the prime numbers generated by the algorithm are the
results of a uniform sampling of numbers $s_i$.

The algorithm for sampling the couples $(N,r)$ first makes use of the
Kalai's algorithm (Alg.~\ref{alg:kalai}) for uniformly sampling $N$
(along with its prime factors) in the range $1\dots
N_{max}=2^{n_N}$. In turn, the uniform sampling of the $s_i$'s,
required in step 1 of Kalai's algorithm, is also performed by using
the Kalai's algorithm, so that the obtained prime
factors $p_i$ of $N$ are supplied with the factorization of
$p_i-1$. So, summarizing, we uniformly sample $N$, in the form:
\begin{equation}
  N = \prod_i^{P} p_i^{\nu_i}
\end{equation}
where the $p_i$'s are prime; moreover, we know the prime factorization
of $p_i-1$:
\begin{equation}
  p_i = \prod_j^{Q_i} q_{i,j}^{\mu_{i,j}} + 1
  \label{eq:factorization:p:i}
\end{equation}
where the $q_i$'s are prime.  It is worth highlighting that Kalai's
algorithm (Alg.~\ref{alg:kalai}) is used at two levels, not
recursively, i.e. we need the prime factorization of $N$ and of
$p_i-1$, but \emph{we do not need} the prime factorization of
$q_{i,j}-1$.  The information obtained by the procedure, the $p_i$'s
and the $q_{i,j}$'s, is sufficient to calculate $r$ as follows.

In group theory, the quantity $r$ is known as the \emph{order of $a$} in the
multiplicative group $\mathbb{Z}^*_N$.  The \emph{order} of the group,
$\varphi(N)$, is defined as the number of invertible elements of
$\mathbb{Z}^*_N$. It can be computed from the prime factorization of
$N$:
\begin{equation}
  \varphi(N) = \prod_i^{P} p_i^{\nu_i-1} \left(p_i - 1\right)
\end{equation}
By using Eq.~\ref{eq:factorization:p:i}:
\begin{equation}
  \varphi(N) = \prod_i^{P} p_i^{\nu_i-1} \prod_j^{Q_i} q_{i,j}^{\mu_{i,j}}
\end{equation}
Some of the $p_i$'s and $q_i$'s can be equal. In polynomial time, it
is possible to rearrange the two products as:
\begin{equation}
  \varphi(N) = \prod_i^{Q'} {q'}_i^{\lambda_i}
\end{equation}
where the $q'_i$'s are distinct primes and $Q'$ is the number of
distinct $q'_i$'s.


We then proceed by finding the largest $\tau_i\le \lambda_i$ such that:
\begin{equation}
  a^{\frac{\varphi(N)}{{q'}_i^{\tau_i}}} \equiv 1 \pmod{N}
\end{equation}
This operation can be also done in polynomial time in $n_N$. The order
of $a$ is then:
\begin{equation}
  r = \prod_i^{Q'} q_i^{\lambda_i-\tau_i}
\end{equation}
This works since $\mathbb{Z}^*_N$ is an Abelian group which decomposes
as a direct group of Abelian groups of orders
${q'}_1^{\lambda_1}, \ldots, q_{Q'}^{\lambda_{Q'}}$.

Summarizing, the reported procedure first generates uniformly
distributed random numbers $N$ along with the prime factors $p_i$ and
$q_{i,j}$; then it calculates $r$ from $p_i$ and $q_{i,j}$. We thus
conclude that the quantum states involved in the above-described
circuit, Circ.~\ref{circ:shor:multiple}, also contain a polynomial
amount of information according to Def.~\ref{def:amount:information}.

\section{Conclusion}

\label{sect:conclusion}

I proposed how to define the information content of a quantum state,
inspired to the Kolmogorov complexity. The formal definition is given
in Def.~\ref{def:amount:information}. It is trivial to see that the
information content of any quantum state is at most exponential in the
number of qubits $n$.  I discuss the information content of quantum
states generated by applying $m$ quantum gates to the initial state
$\left|0\right>$, with $m$ polynomial in $n$, as often done in quantum
computation.

Trivially, the information content is polynomial in the case of
quantum circuits that can be simulated classically. More interesting
is the case of quantum circuits that are believed to be difficult to
classically simulate, in particular the ones used in algorithms that
are believed to show superior performances in quantum computers than
in classical ones. In particular, I analyzed quantum circuits that are
used to solve {\bf NP} problems and thus cannot be simulated
classically. They include the algorithms for solving the Abelian
hidden subgroup problem and the factorization (Shor's
algorithm). Grover's algorithm is also discussed. It turns out that
the information content roughly corresponds to the certificate of the
corresponding {\bf NP} problem, which has polynomial size in $n$.

Looking for a more complex situation, it can be suggested to try with
problems in which such a certificate does not exist. I created such a
case by using the quantum circuit used to factorize $N$ and feeding it
with a superposition of every $N$ with $n$ bits. In this case, the
information content should contain the factorizations of every $N$,
thus an exponential amount of information.  Instead, even in this case
the amount of information is polynomial in $n$: actually, it turns out
that the sampling algorithm requires even less information than for
the case of the factorization with a fixed $N$.

We know that there are specially designed quantum circuits giving rise
to quantum states with a super-polynomial information amount,
according to Def.~\ref{def:amount:information}. However, from the
examples discussed above, we see that the use of quantum states with
this huge amount of information is not necessarily needed by quantum
algorithms showing an exponential speed-up with respect to classical
ones.  It seems that the huge Hilbert space is really there, but the
speed-up of quantum computation is not necessarily require to store a
correspondingly huge amount of information in the quantum state.

\newpage

\renewcommand{\thepage}{Appendix~--~\arabic{page}}

\appendix

\section{Modulus exponentiation with generic $a$ and $N$ }

\label{app:non:periodic}

In this section, I discuss the period-finding quantum circuit used in
Shor's algorithm, for the case in which the state is prepared using
the modulus exponentiation $f_{a,N}(x)$ with generic $a$ and $N$. In
other words, I relax the condition (present in Shor's algorithm) that
$a$ and $N$ are coprime.

I remind that, in general, the modulus exponentiation is
``ultimately'' periodic: there are two numbers $r$ and $x_{min}$ such
that $f_{a,N}(x)=f_{a,N}(x+r)$ for every $x\ge x_{min}$. If $a$ and
$N$ are coprime (as in Shor's algorithm), then $x_{min}=0$. In this
section I consider the more general case, in which $x_{min}$ can be
larger than 0, i.e. the function is ``ultimately periodic'' but not
necessarily ``periodic''.

In order to sample the result of the measurement, it is first
necessary to calculate $x_{min}$; following the discussion in
Sect.~\ref{sect:shor:alg}, we assume that $r$ is known. It is of
course possible to find $x_{min}$ by simply trying all the values
starting from 0 and verifying if
$f_{a,N}(x_{min})=f_{a,N}(x_{min}+r)$. This procedure is actually
efficient, thanks to the following proposition:

\begin{proposition}
  For every $N$ and $a<N$, $x_{min}$ is smaller than the number of
  binary digits of $N$, $x_{min}\le \log_2(N)$.
\end{proposition}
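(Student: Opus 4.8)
The plan is to think about the structure of the sequence $a^0, a^1, a^2, \ldots$ modulo $N$ and to locate precisely where it becomes periodic. Write $N = \prod_p p^{e_p}$ for the prime factorization of $N$. By the Chinese Remainder Theorem, the behaviour of $a^x \bmod N$ is governed, prime-power by prime-power, by $a^x \bmod p^{e_p}$. For each prime $p$, split into two cases: if $p \nmid a$, then $a$ is already a unit modulo $p^{e_p}$, so the sequence $a^x \bmod p^{e_p}$ is purely periodic (it enters its cycle at $x=0$). If $p \mid a$, then the powers $a^x$ accumulate more and more factors of $p$; once $x \ge e_p$ we have $p^{e_p} \mid a^x$, hence $a^x \equiv 0 \pmod{p^{e_p}}$ and the sequence is constantly $0$ from that point on — in particular it is periodic (with any period) for $x \ge e_p$. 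Since $e_p \le \log_2 N$ whenever $p \ge 2$, each component sequence is periodic for $x \ge \log_2 N$.

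The key step is then to combine these componentwise statements. Let $x_{\min}$ be, as in the statement, the least index from which $f_{a,N}$ is periodic with its minimal period $r$. I would argue that $f_{a,N}(x) = f_{a,N}(x+r)$ holds as soon as the congruence $a^x \equiv a^{x+r} \pmod{p^{e_p}}$ holds for every $p \mid N$, and by the case analysis above each such congruence holds once $x \ge e_p$ (for $p \mid a$ it is the ``eventually zero'' observation; for $p \nmid a$ it holds for all $x \ge 0$ with $r$ the lcm of the multiplicative orders, which divides the global $r$). Taking the maximum over $p$, the function is periodic for all $x \ge \max_p e_p$, and $\max_p e_p \le \log_2 N$ because $2^{e_p} \le p^{e_p} \le N$. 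Hence the \emph{least} such starting index satisfies $x_{\min} \le \log_2 N$.

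The only mild subtlety — and the part I would be most careful about — is making sure the period used here really is the \emph{minimal} period $r$ of $f_{a,N}$, not merely \emph{some} period: one must check that the componentwise period (the lcm of the unit-part orders) coincides with $r$, so that the bound genuinely controls $x_{\min}$ as defined. This is straightforward once one notes that for the prime powers with $p \mid a$ the tail is eventually constant and imposes no constraint on the period, while for $p \nmid a$ the period of the $p^{e_p}$-component is exactly $\mathrm{ord}_{p^{e_p}}(a)$, and $r = \mathrm{lcm}_p \, \mathrm{ord}_{p^{e_p}}(a)$ by CRT; no inflation of the period occurs. With that in hand the bound $x_{\min} \le \log_2 N$ follows as above. (One could also give a softer, more computational argument: directly bound the length of the ``pre-period'' of the map $x \mapsto a x \bmod N$ iterated from $1$ by the $p$-adic valuation argument, reaching the same $\max_p e_p \le \log_2 N$.)
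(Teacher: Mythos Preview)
Your argument is correct. The paper reaches the same bound by a somewhat different organisation: instead of decomposing via CRT, it tracks $d_k=\gcd(a^k,N)$, shows this sequence stabilises at some index $\bar{k}$ to a value $d$ with $N/d$ coprime to $a$, applies Euler's theorem modulo $N/d$ to exhibit a period $r'$ valid from $\bar{k}$ onward, and then bounds $\bar{k}$ by the maximal prime multiplicity in $N$, hence by $\log_2 N$. Your CRT route is more modular --- each prime power is dispatched in one line as either a unit case or an eventually-zero case --- while the paper's gcd argument is more global; both land on the same key quantity $\max_p e_p$.

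One remark on the subtlety you flag: it is in fact unnecessary to identify the \emph{minimal} period. For any eventually periodic sequence, if $f(x)=f(x+r')$ holds for all $x\ge x_0$ and \emph{some} $r'>0$, then $f(x_0)$ already lies in the cycle, so automatically $f(x)=f(x+r)$ for all $x\ge x_0$ with $r$ the minimal period; hence $x_{\min}\le x_0$ follows directly. The paper uses exactly this shortcut: its $r'$ is only shown to be a multiple of $r$, and $x_{\min}\le\bar{k}$ is concluded without ever pinning down $r$.
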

\begin{proof}
  Let us define:
  \begin{equation}
    d_k = \gcd\left(a^k, N\right)
  \end{equation}
  With increasing $k$, $d_k$ increases; however, it cannot exceed $N$,
  thus it will stabilize. This means that there is a $\bar{k}$ such
  that $d_k=d_{\bar{k}}$ for every $k\ge\bar{k}$. I call
  $d=d_{\bar{k}}$ this $\gcd$.

  In order to proceed with the discussion, it is worth expressing
  $a$ and $N$ in terms of their prime factorization:
  \begin{equation}
    a = \prod_{i} p_i^{\mu_i} q_i^{\nu_i}
  \end{equation}
  and
  \begin{equation}
    N = \prod_{i} p_i^{\lambda_i} s_i^{\tau_i}
  \end{equation}
  where $p_i$, $q_i$, and $s_i$ are distinct primes and $\mu_i$, $\nu_i$,
  $\lambda_i$, and $\tau_i$ are the multiplicities. The primes $p_i$ are
  the only common prime factors between $a$ and $N$.

  I highlight that the prime factorization of $a$ and $N$ is only used
  in this proof, in order to prove the thesis, but it is not proposed
  as a part of an algorithm. Indeed, such an algorithm would not be
  efficient.

  Using the prime factorization, $d_k$ can be expressed as:
  \begin{equation}
    d_k = \prod_{i} p_i^{\min(k\mu_i, \lambda_i)}
    \label{eq:app:d:k:factorized}
  \end{equation}
  For large enough $k$ we get $d$:
  \begin{equation}
    d = \prod_{i} p_i^{\lambda_i}
  \end{equation}
  The ratio between $N$ and $d$ is thus:
  \begin{equation}
    \frac{N}{d} = \prod_{i} s_i^{\tau_i}
  \end{equation}
  It can be noticed that $N/d$ does not have any prime factor in
  common with $a$, thus they are coprime. According to Euler's
  theorem, the function $a^x \bmod N/d$ is periodic (not just
  ultimately periodic), thus there is an integer $r'>0$ such that:
  \begin{equation}
    a^{r'} \bmod \frac{N}{d} = a^0 \bmod \frac{N}{d} = 1
  \end{equation}
  In other words, there is a natural number $p>0$ such that:
  \begin{equation}
    a^{r'} = \frac{Np}{d} + 1
  \end{equation}
  Moreover, according to the definition of $d$, there is a natural
  number $q>0$ such that:
  \begin{equation}
    a^{\bar{k}} = d q
  \end{equation}
  Combining the last two equations:
  \begin{equation}
    a^{\bar{k}+r'} = N p + a^{\bar{k}}
  \end{equation}
  Applying the modulus:
  \begin{equation}
    a^{\bar{k}+r'} \bmod N = a^{\bar{k}} \bmod N
  \end{equation}
  By comparing this expression with the definitions of $r$ and
  $x_{min}$, we conclude that $r'$ is a multiple of $r$ and
  \begin{equation}
    x_{min}\le \bar{k} .
    \label{eq:app:x:min:bar:k}
  \end{equation}

  From Eq.~\ref{eq:app:d:k:factorized}, if $k$ is equal or larger than
  all the $\lambda_i$, then $d_k=d$. As a consequence, $\bar{k}$ is at
  most the multiplicity of any prime in the factorization of $N$. In
  turn, this multiplicity is at most $\log_2(N)$. This can be easily
  seen by noticing that the multiplicity of 2 in any number up to $N$
  is up to $\lfloor\log_2(N)\rfloor$; it is not possible to have a
  larger multiplicity with prime factors larger than 2. The conclusion
  is thus that:
  \begin{equation}
    \bar{k}\le \log_2(N)
  \end{equation}
  The combination of this last equation and
  Eq.~\ref{eq:app:x:min:bar:k} leads to the thesis.
\end{proof}

With the knowledge of $x_{min}$ and $r$, it is now possible to
describe the sampling procedure. Following the procedure described in
Sect.~\ref{sect:shor:alg}, we can still generate a random $\bar{x}$
and output $f_{a,N}(\bar{x})$ as the measurement of the second
register, even if $x_{min}$ does not vanish. However, two different
cases must be taken into consideration to properly sample the
measurement of the first register.
\begin{description}
\item[$\bar{x}<x_{min}$] The pre-image of $f_{a,N}(\bar{x})$ under
  $f_{a,N}(x)$ is constituted by the single point $\bar{x}$ ($\bar{x}$
  belongs to the non-periodic part of $f_{a,N}(x)$). The Fourier
  transform of the state $\left|\bar{x}\right>$ is the uniform
  superposition of all the $\tilde{x}$, thus the requested sample is a
  uniformly random $\tilde{x}$.
\item[$\bar{x}\ge x_{min}$] The pre-image of $f_{a,N}(\bar{x})$ under
  $f_{a,N}(x)$ is composed by a sequence of $M+1$ points, spaced by
  $r$, represented as $m r+x_0$ for $m$ in the range $0\dots M$. The
  values of $M$ and $x_0$ can be easily calculated based on $x_{min}$
  and $r$. The Fourier transform corresponds to
  Eq.~\ref{eq:shor:fourier:raw}; the procedure reportd in
  Sect.~\ref{sect:shor:alg} can thus be applied, with the suitable
  values of $M$.
\end{description}

\section{Sampling from the probability $\rho\left(v\right)$}

\label{app:sampling:rho}

In this section I show how to sample from $\rho\left(v\right)$.  This
operation can be efficiently performed by means of Monte Carlo
rejection sampling.  For clarity sake and for introducing the
notation, I shortly summarize the Monte Carlo rejection sampling
procedure. A ``proposal function'' $\eta\left(v\right)$, such that
$\eta\left(v\right)\ge \rho\left(v\right)$, is suitably chosen; then
the following steps are performed:
\begin{enumerate}
\item A natural number $v$ (the proposed sample) is sampled from the
  proposal probability
  $P_p\left(v\right)=\frac{\eta\left(v\right)}{\sum_{v'} \eta\left(v'\right)}$;
\item The number $v$ is accepted (i.e., returned as the valid
  sample) with the acceptance probability
  $ P_a\left(v\right)=\frac{\rho\left(v\right)}{\eta\left(v\right)}$;
\item In case of rejection, the procedure is repeated from point 1.
\end{enumerate}

The chosen $\eta\left(v\right)$ is reported in Eq.~\ref{eq:def:eta};
it is approximately the envelope of $\rho\left(v\right)$. The two
functions are shown in Fig.~\ref{fig:rho:rho:e}.

It is easy to prove that $\eta\left(v\right)\ge \rho\left(v\right)$;
this inequality can be also seen in the graph. Under this condition,
the Monte Carlo rejection sampling gives the correct results. However,
it is still necessary to prove that the procedure is efficient,
i.e. if it operates in polynomial time in $\log(q)$ on average. This
is ensured if~\cite{robert}:
\begin{itemize}
\item It is possible to efficiently sample from the proposal
  probability $P_p\left(v\right)$.
\item The number
  \begin{equation}
    \nu=\sum_{v} \eta\left(v\right) ,
    \label{eq:def:nu}
  \end{equation}
  representing the average number of proposed samples needed to obtain
  one accepted sample, is at most polynomial in $\log(q)$;
\end{itemize}

The efficient method for sampling the proposal probability
$P_p\left(v\right)$ that I propose is based on the algorithm for
sorted (i.e. monotonically decreasing) probability
distributions~\cite{bringmann2017}. Our proposal probability
$P_p\left(v\right)$ is not sorted, but it is symmetrical around the
central point $v=q/2$, i.e.  $P_P\left(v\right)=P_P\left(q-v\right)$, and
it is monotonically decreasing on the left side and
increasing on the right side (see Fig.~\ref{fig:rho:rho:e}). It is
thus possible to sort the probabilities in decreasing order, by
sequentially taking one point from the left and one point from the
right. This is done by defining:
\begin{equation}
  v\left(\bar{v}\right) = \left\{ \begin{array}{lr}
    \left\lfloor \frac{\bar{v}}{2} \right\rfloor & \bar{v}\, \mathrm{even} \\
    q-1-\left\lfloor \frac{\bar{v}}{2} \right\rfloor & \bar{v}\, \mathrm{odd}
  \end{array} \right.
\end{equation}
The resulting $P_p\left[v\left(\bar{v}\right)\right]$ monotonically
decreases with increasing $\bar{v}$; moreover, the function
$v\left(\bar{v}\right)$ is a bijection on the domain $0\dots q-1$. It
is thus possible to apply the the efficient algorithm of
Ref.~\cite{bringmann2017} to the sorted probability distribution
$P_p\left[v\left(\bar{v}\right)\right]$; the algorithm returns
$\bar{v}$, which is then used to calculate $v\left(\bar{v}\right)$ and
return it as the desired sample.

The proof of the second point is now sketched. We start by identifying
the point $\delta$ in which the transition between the two arguments
of the $\min(\cdot, \cdot)$ operator in Eq.~\ref{eq:def:eta} takes
place:
\begin{equation}
  \delta = 
  \frac{q}{2\pi}
  \arccos\left[1-\frac{2}{\left(M+1\right)^2} \right]
  \label{eq:def:delta}
\end{equation}
This allows us to rewrite Eq.~\ref{eq:def:eta} as:
\begin{equation}
  \eta\left(v\right) = \left\{
  \begin{array}{lr}
    \frac{M+1}{q}
    &
    v\in R_A\cup R_E\\ 
    \frac{1}{q\left(M+1\right)}
    \frac{ 2 }{ 1-\cos\left(2\pi \frac{v}{q} \right) } &
    v\in R_B\cup R_C\cup R_D
  \end{array}
  \right.
  \label{eq:def:eta:ranges}
\end{equation}
where the five ranges are $R_A = 0\dots \lfloor\delta\rfloor$,
$R_B = \lfloor\delta\rfloor+1\dots q/2-1$, $R_C = \{q/2\}$ (a single number),
$R_D = q/2+1\dots q-\lfloor\delta\rfloor-1$, and
$R_E = q-\lfloor\delta\rfloor\dots q-1$. The reason for splitting the central
part of the domain into the three ranges $R_B$, $R_C$, and $R_D$ is
related to the symmetry of $\eta\left(v\right)$ and will be discussed
below.

According to the definition, $M$ is in the range $1\dots q-1$; using
Eq.~\ref{eq:def:delta}, we get $\delta\in \left(0,q/6\right]$. These
bounds will be used in the following calculations.

The calculation of $\nu$ by Eq.~\ref{eq:def:nu:ranges} is now split
into the ranges:
\begin{equation}
  \nu=S_1 + S_2 + S_3
  \label{eq:def:nu:ranges}
\end{equation}
where $S_1$, $S_2$, and $S_3$ are the sumations of
$\eta\left(v\right)$ over $R_A\cup R_E$, $R_B\cup R_D$, and $R_C$,
respectively.

The addends $\eta\left(v\right)$ are constant over $R_A$ and $R_E$, and
equal to $(M+1)/q$:
\begin{equation}
  S_1 = \sum_{v\in R_A\cup R_E} \eta\left(v\right) =
  \frac{M+1}{q}\left(2 \left\lfloor\delta\right\rfloor +1\right)
  \label{eq:def:S:one:def}
\end{equation}
The range $R_C$  is composed by a single number:
\begin{equation}
  S_3 = \sum_{v\in R_C} \eta\left(v\right) =
  \frac{1}{q\left(M+1\right)}
  \label{eq:def:S:three:def}
\end{equation}
Thanks to the symmetry $\eta\left(v\right)=\eta\left(q-v\right)$, the
sumations of $\eta\left(v\right)$ over $R_B$ and $R_D$ are equal. The
result is:
\begin{equation}
 S_2 = \sum_{v\in R_B\cup R_D} \eta\left(v\right) =
 \frac{4}{q\left(M+1\right)} \sum_{v=\lfloor\delta\rfloor+1}^{q/2-1} \frac{1}{1-\cos
   \left(2\pi \frac{v}{q}\right)}
\end{equation}
The sum is further split into two parts, separating the first term,
which gives the largest contribution:
\begin{equation}
  S_2 = S_{2a} + S_{2b}
  \label{eq:def:nu:subranges}
\end{equation}
where:
\begin{equation}
  S_{2a} = \frac{4}{q\left(M+1\right)} \frac{1}{1-\cos
    \left(2\pi \frac{\left\lfloor\delta\right\rfloor+1}{q}\right)}
  \label{eq:S:two:a:definition}
\end{equation}
and
\begin{equation}
  S_{2b} = \frac{4}{q\left(M+1\right)} \sum_{v=\lfloor\delta\rfloor+2}^{q/2-1}
  \frac{1}{1-\cos \left(2\pi \frac{v}{q}\right)}
  \label{eq:S:two:b:definition}
\end{equation}

It is now necessary to find quantities, larger than $S_1$, $S_{2a}$,
$S_{2b}$, and $S_3$, that are independent of $M$.

Since $M$ and $q$ are positive and integer, from
Eq.~\ref{eq:def:S:three:def}:
\begin{equation}
  S_3 \le 1
  \label{eq:S:three:result}
\end{equation}

$S_1$ is defined in Eq.~\ref{eq:def:S:one:def}. Since $M\le q-1$:
\begin{equation}
  S_1 \le 2 \frac{M+1}{q} \left\lfloor\delta\right\rfloor + 1
  \label{eq:S:one:first:passage}
\end{equation}
Applying the inequality:
\begin{equation}
  \arccos\left(1-\right) \le \sqrt{2x}+x
\end{equation}
to Eq.~\ref{eq:def:delta}, we get:
\begin{equation}
  \delta \le
  \frac{q}{\pi}\left( \frac{1}{M+1} +
  \frac{1}{\left(M+1\right)^2} \right)
  \label{eq:delta:inequality}
\end{equation}
Using this inequality to further elaborate
Eq.~\ref{eq:S:one:first:passage} gives:
\begin{equation}
  S_1 \le 1 + \frac{3}{\pi}
  \label{eq:S:one:result}
\end{equation}

Now I discuss the term $S_{2a}$, defined in
Eq.~\ref{eq:S:two:a:definition}. The inequality
\begin{equation}
  1-\cos \left(x\right)
  \ge \frac{x^2}{4}
\end{equation}
holds for $0\le x \le 3\pi/4$. From
Eq.~\ref{eq:delta:inequality}, $\delta\le 3q/(4\pi)$; the argument of
$\cos(\cdot)$ in Eq.~\ref{eq:S:two:a:definition},
i.e. $2\pi\left(\lfloor\delta\rfloor+1\right)/q$, is within the range
for large enough $q$, thus the inequality can be used:
\begin{equation}
  S_{2a} \le \frac{4}{\pi^2} \frac{q}{M+1}
  \frac{1}{\left(\left\lfloor\delta\right\rfloor+1\right)^2}
  \label{eq:S:two:a:passage:a}
\end{equation}
Applying the inequality:
\begin{equation}
  \arccos\left(1-x\right) \ge \sqrt{x}
\end{equation}
to Eq.~\ref{eq:def:delta}, we get:
\begin{equation}
  \delta \ge
  \frac{q}{2\pi}
  \frac{\sqrt{2}}{M+1}
  \label{eq:delta:inequality:bis}
\end{equation}
Moreover, $\lfloor\delta\rfloor+1\ge\delta$. These inequalities are
used to elaborate Eq.~\ref{eq:S:two:a:passage:a}:
\begin{equation}
  S_{2a} \le 8\frac{M+1}{q}
\end{equation}
Remembering that $M\le q-1$:
\begin{equation}
  S_{2a} \le 8 .
  \label{eq:S:two:a:result}
\end{equation}

The last term to be discussed is $S_{2b}$, defined in
Eq.~\ref{eq:S:two:b:definition}.  The addends of the summation
decrease with increasing $v$; this allows us to calculate a quantity
greater than $S_{2b}$ in terms of an integral:
\begin{equation}
  S_{2b} \le 
  \frac{4}{q\left(M+1\right)} \int_{\lfloor\delta\rfloor+1}^{q/2-2}
  \frac{\mathrm{d}\xi}
  {1-\cos \left(2\pi \frac{\xi}{q}\right)}
\end{equation}
The calculation of the integral is straightforward:
\begin{equation}
  S_{2b} \le
  \frac{2}{\pi\left(M+1\right)}
  \left[
    \cot\left( \pi \frac{\left\lfloor\delta\right\rfloor+1}{q}  \right)
    - \tan\left( \frac{2\pi}{q} \right)
    \right]
\end{equation}
For $q>2$ the second term in parenthesis is positive:
\begin{equation}
  S_{2b} \le
  \frac{2}{\pi\left(M+1\right)}
  \cot\left( \pi \frac{\left\lfloor\delta\right\rfloor+1}{q}  \right)
\end{equation}
The argument of the $\cot(\cdot)$ is between
0 and $\pi$ (actually, less than $\pi/2)$; in this range:
\begin{equation}
  \cot\left(x\right) \le \frac{1}{x}
\end{equation}
Using this inequality:
\begin{equation}
  S_{2b} \le
  \frac{2}{\pi^2\left(M+1\right)}
  \frac{q}{\left\lfloor\delta\right\rfloor+1}
\end{equation}
Using Eq.~\ref{eq:delta:inequality:bis}:
\begin{equation}
  S_{2b} \le
  \frac{2\sqrt{2}}{\pi}
  \label{eq:S:two:b:result}
\end{equation}

Summarizing, the terms in which $\nu$ is decomposed, i.e. $S_1$,
$S_{2a}$, $S_{2b}$, and $S_3$ (see Eqs.~\ref{eq:def:nu:ranges} and
\ref{eq:def:nu:subranges}) are smaller than constant quantities, as
shown in Eqs.~\ref{eq:S:three:result}, \ref{eq:S:one:result},
\ref{eq:S:two:a:result}, and \ref{eq:S:two:b:result}. The sampling of
$\rho\left(v\right)$ is thus efficient.

\section{Proof of the validity and efficiency of
  Algorithm~\ref{alg:grover:without:x0}}
\label{app:grover:proof}

In this section I prove that Alg.~\ref{alg:grover:without:x0} is
valid and efficient.

The probability of \emph{not} finding $x_0$ among the $N$ generated
samples $x_i$ is $\left(1-2^n\right)^N$, corresponding to the
probability of returning 0 in step 3. The probability of reaching step
4 is
\begin{equation}
  P_4=1-\left(1-2^n\right)^N
\end{equation}
It represents the probability of finding $x_0$. According to
Eq.~\ref{eq:alg:grover:2:N}, $N$ is the smallest integer such that
$P_4\ge P(t)$: the algorithm finds $x_0$ with a slightly larger
probability than the quantum circuit (but with a polynomially longer
time, see below). This reasoning also ensures that $P'$ in
Eq.~\ref{eq:alg:grover:2:Pp} is smaller than 1.

The total probability of returning $x_0$ in step 5 is $P_4 P'$,
i.e. the composed probability of reaching step 5 and outputting $x_0$.
Using $P'$ defined in Eq.~\ref{eq:alg:grover:2:Pp}:
\begin{equation}
  P_4 P' = P(t)
\end{equation}
Thus I have shown that the algorithm outputs $x_0$ with the desired
probability. Necessarily 0 is also output with the correct probability
$1-P(t)$.

It remains to show that the sampling is efficient. This depends on how
many attempts must be done for finding $x_0$, i.e. on $N$.  By using
the definition of $N$, Eq.~\ref{eq:alg:grover:2:N}, and the definition
of $P(t)$, Eq.~\ref{eq:grover:def:P:t}:
\begin{equation}
  N \le  \frac{\log\left\{1-
    \sin^2 \left[
      \left(2t+1\right) \arcsin\left(\frac{1}{2^{n/2}}\right) \right]
    \right\}}
  {\log\left(1-\frac{1}{2^n}\right)} +1
\end{equation}
For large enough $n$, $\arcsin\left(1/2^{n/2}\right)<1/2^{n/2-1}$:
\begin{equation}
  N \le  \frac{\log\left[1-
    \sin^2 \left( \frac{2t+1}{2^{n/2-1}} \right)
    \right]}
  {\log\left(1-\frac{1}{2^n}\right)} +1
\end{equation}
For positive $x$, $\sin(x)<x$:
\begin{equation}
  N \le  \frac{\log\left[1-
    \left( \frac{2t+1}{2^{n/2-1}} \right)^2
    \right]}
  {\log\left(1-\frac{1}{2^n}\right)} +1
\end{equation}
Since $-\log(1-x)>x$:
\begin{equation}
  N \le -2^n \log\left[1-
    \left( \frac{2t+1}{2^{n/2-1}} \right)^2
    \right] +1
\end{equation}
For any polynomial $t(n)$ and for large enough $n$:
\begin{equation}
  N \le 8 \left(2t+1\right)^2 + 1
\end{equation}
Since $t(n)$, the number of iterations, is a polynomial in $n$, and
each iteration is polynomial in $n$, the running time is polynomial in
$n$, i.e. the algorithm is efficient.







\tableofcontents

\bibliographystyle{unsrt}
\bibliography{quantum}

\end{document}